\documentclass{article}

\def\theoremstyle{\bf}
\newtheorem{theorem}{\theoremstyle Theorem}
\newtheorem{Prop}[theorem]{\theoremstyle Proposition}
\newenvironment{proof}[1][{}]{{\theoremstyle Proof#1. }\nopagebreak}{}

\usepackage{german}
\usepackage{amssymb}
\usepackage{exscale}
\usepackage{array}
\usepackage[round]{natbib}
\usepackage{bm} 
\usepackage{graphicx}

\selectlanguage\english

\textwidth=16.6truecm
\textheight=24.8truecm
\hoffset=-2.2truecm
\voffset=-3truecm
\itemsep0pt
\pagenumbering{arabic}
\setlength{\parindent}{0pt}

\def\myaddress{Institut f\"ur Mathematik,
  Universit\"at Mannheim\\A5, 6, D-68131 Mannheim 
}
\def\myemail{schlather@math.uni-mannheim.de}

\usepackage{moreverb}

\usepackage[dvips,colorlinks,bookmarksopen,bookmarksnumbered,citecolor=red,urlcolor=red]{hyperref}

\usepackage{natbib}

\newcommand\BibTeX{{\rmfamily B\kern-.05em \textsc{i\kern-.025em b}\kern-.08em
T\kern-.1667em\lower.7ex\hbox{E}\kern-.125emX}}



\def\RR{{\mathbb R}}
\def\NN{{\mathbb N}}

\textheight22cm
\textwidth15cm
\hoffset-1cm
\voffset-2cm

\begin{document}

\title{A parametric variogram model
  bridging between stationary and intrinsically stationary processes}

\author{Martin Schlather\\\myaddress\\\myemail}

\maketitle

\emph{Abstract}:
A simple  variogram model with two parameters is presented that includes
the  power variogram for the fractional Brownian motion, a modified 
De Wijsian model,
 the generalized Cauchy model 
and the multiquadrics model.
One parameter controls the smoothness of the process.
The other parameter allows for a smooth parametrization between
stationary and intrinsically stationary second order processes
in a Gaussian framework,
or between mixing and non-ergodic max-stable processes
when modeling spatial extremes by a Brown-Resnick process.

\medskip
{\em Keywords: }{Brown-Resnick process; Cauchy model;  fractional Brownian motion;
Gaussian process; variogram}

\section{Introduction}
 
A Gaussian random field $Z$ is a popular process to model
spatial data in $\RR^d$, mostly assuming some kind of at least underlying
 stationarity and isotropy
of the field.
A weakly
stationary random field can be characterized by a
 bounded variogram $\gamma$, where
$\gamma(h) = \frac12 E \{Z(h) - Z(0)\}^2$, $h\in\RR^d$ \citep{gneitingguttorp10}.
In contrast, the variogram of an intrinsically stationary random field
might be unbounded.
Although any unbounded variogram can be approximated arbitrarily closely on
any compact set by a bounded variogram \citep{schlathergneiting06},
 models that bridge between bounded and unbounded models
 seem to be used only rarely in statistics if at all, as appealing
 models are missing.

The Brown-Resnick process on $\RR^d$
\citep{brownresnick77,KSH09} is a max-stable
random field that 
is based on intrinsically stationary Gaussian random fields and
is used to model spatial extreme values,
such as heavy rainfall \citep{fawcett2013estimating,thibaud2013threshold},
 avalanches \citep{blanchetdavison11},
extreme temperature \citep{davison2013geostatistics} or
extreme wind \citep{
zhang2014topography}.
If the underlying variogram model
is bounded, then the corresponding Brown-Resnick field is not ergodic
whilst a one-dimensional process has been shown to be mixing when the variogram model
$\gamma(h)$, $h\in\RR$, grows faster that $4\log |h|$ \citep{wangstoev10,kabluchkoschlather10}.
Ergodicity might be assumed if the extremes are caused by local
events, e.g., storm events, while 
non-ergodic models are suitable for spatially extended events such 
as cyclonic rainfall.
When the kind of 
causative process for spatial extremes
is not clear, a parametric model that allows for both
ergodic and non-ergodic fields is advantageous.

For lack of awareness of suitable parametric models that bridge between
stationary and intrinsically stationary Gaussian random fields, studies in
both traditional geostatistics and spatial extreme value analysis
using the Brown-Resnick model include two classes of models
to cover the cases of bounded and unbounded variograms models,
cf.\ \cite{thibaud2013threshold} or \cite{wadsworth2014efficient}, for
instance.

Here, we present an easily accessible, rotation invariant variogram
model with
two parameters that bridges
between the class of bounded variograms and the class of unbounded ones. It
includes several well-known models, but contains also novel parameter combinations.
Hence, the model might be particularly useful in any
application of spatial statistics.

\section{The model}
For $0<\alpha\le 2$ and $-\infty < \beta\le 2$ 
let the function $\gamma_{\alpha,\beta} : \RR^d
\rightarrow [0,\infty)$ be defined as 
  $$
  \gamma_{\alpha,\beta}(h) = \frac{(1 + \|h\|^\alpha)^{\beta/\alpha} -
    1}{2^{\beta/\alpha} - 1}, 
  $$
where the limiting function is 
${\log(1 +  \|h\|^\alpha)}/{\log 2}$ as $\beta \rightarrow 0$.

\begin{Prop}
The function $\gamma_{\alpha,\beta}$ is a variogram in $\RR^d$ for any 
$d\in\NN$, $0<\alpha\le 2$ and $-\infty < \beta\le 2$.  
\end{Prop}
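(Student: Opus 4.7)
My plan is to reduce the proposition to checking that a certain one-variable function is Bernstein, via Schoenberg's characterization of isotropic variograms valid in every dimension, and then to verify the Bernstein property case by case in the parameters.

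First I would invoke Schoenberg's theorem: $\gamma(h)=\phi(\|h\|^2)$ is a variogram on $\RR^d$ for every $d\in\NN$ if and only if $\phi\colon[0,\infty)\to[0,\infty)$ is a Bernstein function with $\phi(0)=0$. Setting $t=\|h\|^2$, the problem becomes that of showing
$$\phi_{\alpha,\beta}(t)=\frac{(1+t^{\alpha/2})^{\beta/\alpha}-1}{2^{\beta/\alpha}-1}$$
is a Bernstein function on $[0,\infty)$ for all admissible $\alpha,\beta$. The inner map $t\mapsto t^{\alpha/2}$ is Bernstein because $\alpha/2\in(0,1]$, so I would try to write $\phi_{\alpha,\beta}$ as a composition with a suitable outer function, exploiting the closure properties that Bernstein $\circ$ Bernstein is Bernstein and that CM $\circ$ Bernstein is CM.

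This composition argument disposes of two of the three regimes. For $\beta\le 0$, the outer map $r\mapsto(1+r)^{\beta/\alpha}$ is CM (a non-positive power of $1+r$), so $(1+t^{\alpha/2})^{\beta/\alpha}$ is CM in $t$; consequently $\phi_{\alpha,\beta}=(1-(1+t^{\alpha/2})^{\beta/\alpha})/(1-2^{\beta/\alpha})$ is a positive multiple of $1$ minus a CM function, hence Bernstein. For $0<\beta\le\alpha$ the exponent $\beta/\alpha\in(0,1]$ makes the outer map $r\mapsto(1+r)^{\beta/\alpha}-1$ Bernstein (its derivative $(\beta/\alpha)(1+r)^{\beta/\alpha-1}$ is CM since the exponent is non-positive), and the composition of two Bernstein functions is again Bernstein.

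The hard case, and the main obstacle I anticipate, is the regime $\alpha<\beta\le 2$: here $\beta/\alpha>1$ so the outer map $r\mapsto(1+r)^{\beta/\alpha}$ is no longer Bernstein and the composition argument fails. Here I would exploit the additional constraint $\beta/2=(\beta/\alpha)(\alpha/2)\le 1$ and verify the Bernstein property of $\phi_{\alpha,\beta}$ directly through its derivative
$$\phi_{\alpha,\beta}'(t)=\frac{\beta}{2(2^{\beta/\alpha}-1)}\,t^{\alpha/2-1}\bigl(1+t^{\alpha/2}\bigr)^{\beta/\alpha-1},$$
showing that it is completely monotone. For $\beta\in(\alpha,2\alpha]$ the exponent $\beta/\alpha-1\in(0,1]$ and $(1+r)^{\beta/\alpha-1}-1$ admits a Bernstein--L\'evy representation; substituting $r=t^{\alpha/2}$ reduces the task to proving that $t^{\alpha/2-1}(1-e^{-ut^{\alpha/2}})$ is CM in $t$ for every $u>0$. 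I would establish this using the Laplace representation of $e^{-ut^{\alpha/2}}$ as the transform of the $\alpha/2$-stable subordinator, combined with the Gamma integral representation of $t^{\alpha/2-1}$ and closure of CM functions under products. For the subcase $\beta>2\alpha$ (possible only when $\alpha<1$), one peels off integer powers via the binomial expansion $(1+t^{\alpha/2})^k=\sum_{j=0}^k{k\choose j}t^{j\alpha/2}$, in which every monomial $t^{j\alpha/2}$ is Bernstein because $j\alpha/2\le\beta/2\le 1$. The delicate technical step is the complete-monotonicity check for $t^{\alpha/2-1}(1-e^{-ut^{\alpha/2}})$ for non-integer $\beta/\alpha$.
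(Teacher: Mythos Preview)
Your overall strategy---reduce via Schoenberg to a Bernstein property and then argue case by case---is sound and matches the paper's framework. The cases $\beta\le 0$ and $0<\beta\le\alpha$ go through as you describe. The problem lies in what you correctly flag as the hard case $\alpha<\beta\le 2$: the intermediate claim on which your argument rests, namely that $t^{\alpha/2-1}\bigl(1-e^{-ut^{\alpha/2}}\bigr)$ is completely monotone for every $u>0$, is \emph{false} whenever $\alpha\in(1,2)$. Writing $a=\alpha/2\in(1/2,1)$, the function $g(t)=t^{a-1}\bigl(1-e^{-ut^a}\bigr)$ satisfies $g(t)\sim u\,t^{2a-1}\to 0$ as $t\to 0^+$ (since $2a-1>0$) and $g(t)\sim t^{a-1}\to 0$ as $t\to\infty$, while $g(1)=1-e^{-u}>0$; hence $g$ is not even monotone. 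No Laplace or Gamma representation can rescue a claim that is outright false, so your Bernstein--L\'evy decomposition cannot establish the complete monotonicity of $\phi_{\alpha,\beta}'$ term by term. The argument therefore breaks precisely in the novel regime $1<\alpha<\beta\le 2$ (e.g.\ $\alpha=3/2$, $\beta=2$), which is the main new content of the proposition.

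The paper sidesteps this difficulty by working with \emph{complete} Bernstein functions rather than plain Bernstein functions. The key structural fact (Schilling--Song--Vondra\v{c}ek, Corollary~7.12) is that $(f^{\delta}+g^{\delta})^{1/\delta}$ is a complete Bernstein function whenever $f,g$ are complete Bernstein and $0<\delta\le 1$. Taking $f\equiv 1$, $g(\lambda)=\lambda$ and $\delta=\alpha/2$ gives $(1+\lambda^{\alpha/2})^{2/\alpha}\in\mathcal{CBF}$; composing with the complete Bernstein function $\lambda\mapsto\lambda^{\beta/2}$ (closure under composition, Proposition~7.10) then shows that $(1+\lambda^{\alpha/2})^{\beta/\alpha}$ is complete Bernstein for all $0<\beta\le 2$, in particular Bernstein. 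This single observation replaces your entire case analysis for $\beta>0$ and covers the regime $\beta/\alpha>1$ without any direct verification of complete monotonicity.
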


\begin{proof}
  A positive constant and the identity  being complete 
  Bernstein functions,  Corollary 7.12 and Proposition 7.10
  in  \cite{SSV} yield
  that $f(\lambda) = (1 + \lambda^\delta)^{\varepsilon/\delta}$ 
  is a Bernstein function for any $0 < \varepsilon \le 1$
  and $0 < |\delta| \le 1 $.
  Now $f(g)$ is a negative definite function for any negative
  definite function $g$ and any Bernstein function $f$ \citep[Chapter 4]{SSV}.
  Choosing $g(h)=\|h\|^2$, $\varepsilon=\beta/2$ and $\delta
  =\alpha/2$, we get the assertion for positive values of $\beta$. 
  For negative values of $\beta$ we refer to the fact that
  $(1+\gamma)^\beta$ is a positive definite function for any variogram
  $\gamma$, see \citep[p.~36]{schlather12}, for instance.
  By the basic properties of negative definite functions the limiting
  cases obtained by point-wise convergence are included.

\end{proof}

The parameter $\alpha$ models the smoothness of both the variogram 
and
the corresponding (Gaussian) random field \citep{adler,scheuerer10} whereas
$\beta$ indicates the long range behavior \citep{gneiting.powerlaw}.
For negative values of $\beta$ the
model is bounded and heavy tailed, whereas for $0 \le \beta \le 2$
the variogram $\gamma_{\alpha,\beta}$ is unbounded.
The normalizing factor is chosen such that $\gamma_{\alpha,\beta}(1)=1$.

The model simplifies for $\beta=\alpha$ 
to the power model of a fractional Brownian random field
\citep{chilesdelfiner}, 
for $0 < \beta \le \alpha$ to
a generalization of the fractional Brownian model described in 
 \cite{gneiting.nonseparable} and \cite{S10},
and for $\beta < 0$ to the 
 generalized Cauchy model \citep{gneiting.powerlaw}.
It also generalizes partially the multiquadrics and inverse multiquadrics used in 
approximation theory where $\alpha=2$
and $\beta/\alpha \in \RR \setminus \NN_0$, see, for instance,
 \cite{buhmann1990multivariate}, \cite{wendland} or \cite{linming06}. 
As  $\beta \rightarrow 0$, the limiting model 
equals a modified version of the
 De Wijsian model \citep{wackernagel,matheron62}. 
The case $0<\alpha<\beta\le2$ is novel.

The presented variogram model can be generalized. 
Indeed, the proof of the proposition shows
 that $\gamma_{\delta, g} = \{(1+g)^\delta -1\} /
(2^\delta -1)$ is a variogram in $\RR^d$ for any variogram $g$
and $-\infty < \delta \le 1$ with limiting case $\gamma_{0, g} =\log(1 + g)/\log(2)$.

\bibliographystyle{plainnat}

\end{document}